\DeclareSIUnit{\belmilliwatt}{Bm}
\DeclareSIUnit{\dBm}{\deci\belmilliwatt}
\def\BibTeX{{\rm B\kern-.05em{\sc i\kern-.025em b}\kern-.08em
		T\kern-.1667em\lower.7ex\hbox{E}\kern-.125emX}}
\newif\iftag@here
\newcommand*{\taghere}[1][0pt]
{\ifmeasuring@\else
	\global\tag@heretrue
	\tikz[remember picture,overlay]{\coordinate (taghere) at (0pt,#1);}%
	\fi}
\def\place@tag{%
	\iftagsleft@
	\kern-\tagshift@
	\iftag@here
	\global\tag@herefalse
	\tikz[remember picture,overlay]%
	{\path (taghere) -| node[anchor=base]{\rlap{\boxz@}} (0pt,0pt);}%
	\else
	\if1\shift@tag\row@\relax
	\rlap{\vbox{%
			\normalbaselines
			\boxz@
			\vbox to\lineht@{}%
			\raise@tag
	}}%
	\else
	\rlap{\boxz@}%
	\fi
	\kern\displaywidth@
	\fi
	\else
	\kern-\tagshift@
	\iftag@here
	\global\tag@herefalse
	\tikz[remember picture,overlay]%
	{\path  (taghere) -|  node[anchor=base]{\llap{\boxz@}} (0pt,0pt);}%
	\else
	\if1\shift@tag\row@\relax
	\llap{\vtop{%
			\raise@tag
			\normalbaselines
			\setbox\@ne\null
			\dp\@ne\lineht@
			\box\@ne
			\boxz@
	}}%
	\else \llap{\boxz@}%
	\fi
	\fi
	\fi
}
\newacronym{swipt}{SWIPT}{simultaneous wireless information and power transfer}
\newacronym{wpt}{WPT}{wireless power transfer}
\newacronym{wpcn}{WPCN}{wireless powered communication network}
\newacronym{wit}{WIT}{wireless information transfer}
\newacronym{awgn}{AWGN}{additive white Gaussian noise}
\newacronym{tx}{TX}{transmitter}
\newacronym{rx}{RX}{receiver}
\newacronym{bs}{BS}{base station}
\newacronym{ir}{IR}{information receiver}
\newacronym{eh}{EH}{energy harvesting}
\newacronym{id}{ID}{information detection}
\newacronym{irs}{IRS}{intelligent reflected surface}
\newacronym{ap}{AP}{average power}
\newacronym{pp}{PP}{peak power}
\newacronym{aoa}{AOA}{angle-of-arrival}
\newacronym{aod}{AOD}{angle-of-departure}
\newacronym{eec}{EEC}{electical equivalent circuit}
\newacronym{fso}{FSO}{free space optical}
\newacronym{vlc}{VLC}{visible light communication}
\newacronym{led}{LED}{light emitting diode}
\newacronym{ook}{OOK}{On-Off keying}
\newacronym{pam}{PAM}{pulse amplitude modulated}
\newacronym{tdma}{TDMA}{time division multiple access}
\newacronym{ml}{ML}{maximum likelihood}
\newacronym{siso}{SISO}{single-input single-output}
\newacronym{mimo}{MIMO}{multiple-input multiple-output}
\newacronym{miso}{MISO}{multiple-input single-output}
\newacronym{simo}{SIMO}{single-input multiple-output}
\newacronym{rf}{RF}{radio frequency}
\newacronym{dc}{DC}{direct current}
\newacronym{ac}{AC}{alternating current}
\newacronym{papr}{PAPR}{peak-to-average power ratio}
\newacronym{lp}{LPF}{low-pass filter}
\newacronym{mc}{MC}{matching circuit}
\newacronym{mrt}{MRT}{maximum ratio transmission}
\newacronym{ecb}{ECB}{equivalent complex baseband}
\newacronym{tdd}{TDD}{time-division-duplex}
\newacronym{zf}{ZF}{zero forcing}
\newacronym{snr}{SNR}{signal-to-noise ratio}
\newacronym{sinr}{SINR}{signal-to-interference-plus-noise ratio}
\newacronym{rv}{RV}{random variable}
\newacronym{iid}{i.i.d.}{independent and identically distributed}
\newacronym{Pdf}{Pdf}{Probability density function}
\newacronym{pdf}{pdf}{probability density function}
\newacronym{cdf}{cdf}{cumulative density function}
\newacronym{dnn}{DNN}{dense neural networks}
\newacronym{mdp}{MDP}{Markov decision process}
\newacronym{sca}{SCA}{successive convex approximation}
\newacronym{sdr}{SDR}{semi-definite relaxation}
\newacronym{spr}{LP}{low power}
\newacronym{mpr}{MP}{medium power}
\newacronym{lpr}{HP}{high power}
\begin{document}

	\newtheorem{proposition}{Proposition}	
	\newtheorem{lemma}{Lemma}	
	\newtheorem{corollary}{Corollary}
	\newtheorem{assumption}{Assumption}	
	\newtheorem{remark}{Remark}	
	
	\title{	EH Modelling and Achievable Rate for FSO SWIPT Systems with Non-linear Photovoltaic Receivers}

	\author{\IEEEauthorblockN{Nikita Shanin, Hedieh Ajam, 
			Vasilis K. Papanikolaou, Bernhard Schmauss, 
			Laura Cottatellucci, and Robert Schober}\\
		\IEEEauthorblockA{\textit{Friedrich-Alexander-Universit\"{a}t Erlangen-N\"{u}rnberg (FAU), Germany} }		\vspace*{-13pt}}

	\maketitle
	\thispagestyle{empty}

\begin{abstract}
	In this paper, we study optical simultaneous wireless information and power transfer (SWIPT) systems, where a photovoltaic optical receiver (RX) is illuminated by ambient light and an intensity-modulated free space optical (FSO) signal.
To facilitate simultaneous information reception and energy harvesting (EH) at the RX, the received optical signal is first converted to an electrical signal, and then, its alternating current (AC) and direct current (DC) components are separated and utilized for information decoding and EH, respectively.
By accurately analysing the equivalent electrical circuit of the photovoltaic RX, we model the current flow through the photovoltaic p-n junction in both the low and high input power regimes using a two-diode model of the p-n junction and we derive a closed-form non-linear EH model that characterizes the harvested power at the RX.
Furthermore, taking into account the non-linear behaviour of the photovoltaic RX on information reception, we derive the optimal distribution of the transmit information signal that maximizes the achievable information rate.
The proposed EH model is validated by circuit simulation results.
Furthermore, we compare with two baseline models based on maximum power point (MPP) tracking at the RX and a single-diode p-n junction model, respectively, and demonstrate that in contrast to the proposed EH model, they are not able to fully capture the non-linearity of photovoltaic optical RXs.
Finally, our numerical results highlight that the proposed optimal distribution of the transmit signal yields significantly higher achievable information rates compared to uniformly distributed transmit signals, which are optimal for linear optical information RXs.

\end{abstract}

\vspace*{-13pt}
\section{Introduction}
\label{Section:Introduction}
Free space optical (FSO) communication systems have attracted significant attention in the past decades thanks to their ability to provide extremely high data rates by focusing an intensity-modulated laser signal on a small optical receiver (RX) \cite{Zhang2019, Ajam2020}.
Furthermore, since practical RXs in FSO systems employ light-sensitive devices, e.g., photodetectors and photovoltaic cells, to convert the received optical signal into an electrical signal, they are also able to harvest energy from the received signal \cite{Mertens2014, Luque2010}.
This aspect has recently fuelled interest in FSO \gls*{swipt} systems, where the laser signal is exploited not only to convey information, but also to deliver power to user devices \cite{MaShuai2019, Tran2019, Sepehrvand2021, Wang2015, Fakidis2020}.

Optical SWIPT systems employing RXs equipped with \gls*{eh} and information decoding circuits were considered in \cite{MaShuai2019, Tran2019, Sepehrvand2021}, where light emitting diodes (LEDs) were employed at the transmitter (TX) to cover a large area and simultaneously provide indoor illumination and deliver data and energy to user devices.
Since photovoltaic cells and photodetectors are preferable for EH and information reception, respectively, the authors in \cite{MaShuai2019} proposed to deploy both types of optical RXs at the user device for SWIPT.
However, the surface area of practical user equipments may be constrained to be small, and thus, it may not be possible to mount multiple optical RXs on the same user device.
Therefore, in \cite{Tran2019}, the authors studied hybrid RF-optical SWIPT systems and proposed an optical RX design based on a single photovoltaic cell.
In particular, the photovoltaic device was utilized to convert the received optical signal into an electrical signal, whose \gls*{dc} and \gls*{ac} components were separated and utilized for EH and information decoding at a user device, respectively.
For the optical SWIPT systems in \cite{MaShuai2019} and \cite{Tran2019}, EH at the user device was designed by tracking the maximum power point (MPP) at the RX.
In other words, the EH load parameters in \cite{MaShuai2019} and \cite{Tran2019} were tuned to maximize the EH efficiency for a given received power, which, in general, requires EH loads with variable impedances, and thus, is not feasible with practical RX designs, where the electrical properties of the EH load device are typically fixed \cite{Tietze2012}.
To avoid this impractical assumption, in \cite{Sepehrvand2021}, the authors derived a non-linear EH model characterizing the instantaneous power harvested at the RX as a function of the instantaneous received power.

Since the transmit power of optical light sources is limited by hardware constraints, the power received at small photovoltaic cells is typically rather low if undirected LEDs are employed at the TX.
Therefore, in \cite{Wang2015} and \cite{Fakidis2020}, the authors proposed to exploit a directed FSO laser beam focused on the RX to transfer more power to user devices.
In particular, separating the AC and DC signal components at the RX and assuming a fixed EH load resistance, the authors of \cite{Wang2015} derived not only a non-linear EH model for the photovoltaic cell at the RX, but also the frequency response of the RX for the received information signal and the noise impairing information reception.
Furthermore, experimental results in \cite{Fakidis2020} confirmed that for SWIPT with photovoltaic RXs, the non-linearities of photovoltaic cells have to be carefully taken into account for efficient system design, and demonstrated that extremely high data rates exceeding $\SI{1}{\giga\bit\per\second}$ are possible.

We note that although the FSO SWIPT designs in \cite{Wang2015, Fakidis2020} were shown to achieve high data rates, they were still based on several assumptions that may limit performance.
First, in \cite{MaShuai2019, Tran2019, Sepehrvand2021, Wang2015, Fakidis2020}, the authors modelled the photovoltaic cell by an \gls*{eec} employing a single non-linear diode.
However, for high received signal powers, the output electrical current of photovoltaic cells is mainly determined by the diffusion of electrons and holes in the p-n junction of the cells, while for low input powers, diffusion is negligible and the photovoltaic current is caused by particle recombination in the depletion region of the junction \cite{Luque2010, Mertens2014}.
Thus, an accurate \gls*{eec} of a photovoltaic cell has to comprise two diodes to model the non-linearities of the output current in both the low and high input power regimes \cite{Mertens2014, Abbassi2018}.
Furthermore, for the RX design in \cite{MaShuai2019, Tran2019, Sepehrvand2021, Wang2015, Fakidis2020}, the authors assumed a linear frequency response of the RX circuit with respect to the received information signal, which may not be realistic due to the non-linear behaviour of photovoltaic cells.
Moreover, for the analysis of SWIPT systems in \cite{Wang2015, Fakidis2020}, the authors neglected the impact of ambient light on the EH and information reception at the photovoltaic RX.
Finally, we note that the design of the optimal filter for the received information signal is not feasible for practical photovoltaic RX designs due to the RX non-linearities and the extremely high sampling rates required for accurate discretization of the output information signal \cite{Fakidis2020, Tietze2012}.
To the best of the authors' knowledge, the accurate modelling and design of photovoltaic FSO SWIPT systems, where all non-linearities of the photovoltaic RXs are taken into account, have not been considered in the literature, yet.
\begin{figure}[!t]
	\centering
	\includegraphics[draft=false, width=0.85\linewidth]{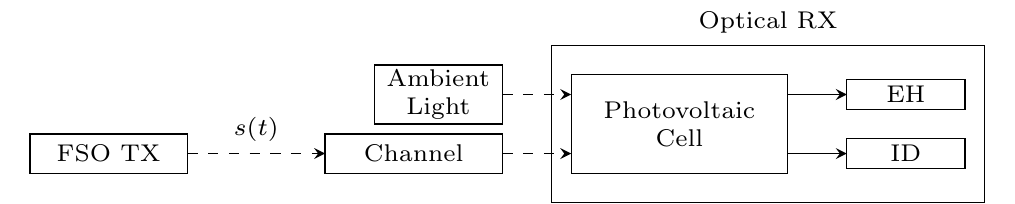}
	\vspace*{-5pt}
	\caption{FSO SWIPT system comprising an FSO TX and a photovoltaic optical RX.}
	\vspace*{-15pt}
	\label{Fig:SystemModel}
\end{figure}

In this work, we consider an optical \gls*{swipt} system with a photovoltaic optical RX.
The photovoltaic RX is illuminated by an information-carrying FSO signal and ambient light and converts the received optical signal into an output electrical signal, whose DC and AC components are separated and utilized for EH and information decoding, respectively.
The main contributions of this work can be summarized as follows.
	First, taking into account the non-linearities of the RX in both the high and low input power regimes, we derive an accurate closed-form EH model to characterize the instantaneous harvested power at the RX.
	Next, since optimal filtering of the electrical information signal is not possible due to the non-linearity of photovoltaic RXs and the high required sampling rates, we design a suboptimal information decoder.
	Furthermore, taking into account the non-linear behaviour of the photovoltaic RX, we derive the optimal transmit signal distribution maximizing the achievable information rate for FSO SWIPT systems.
	We validate the proposed EH model by circuit simulation results.
	Furthermore, we demonstrate that in contrast to the proposed EH model, two baseline EH models based on MPP tracking at the RX and a single-diode p-n junction model, respectively, are not able to accurately capture the non-linearity of photovoltaic RX circuits.
	Finally, our numerical results highlight that the proposed optimal distribution of the TX signal yields significantly higher achievable information rates compared to uniformly distributed TX signals, which are optimal for linear optical information RXs \cite{Lapidoth2009}.


{\itshape Notations:}
The sets of real, real non-negative, and non-negative natural numbers are represented by $\mathbb{R}$, $\mathbb{R}_{+}$, and $\mathbb{N}$, respectively.
The domain of one-dimensional function $f(\cdot)$ is denoted by $\text{dom} \{f\}$.
$\delta(\cdot)$ is the Dirac delta function.
Furthermore, $f_s(\cdot)$ and $F_s(\cdot)$ denote the \gls*{pdf} and \gls*{cdf} of random variable $s$, respectively.
Finally, $\mathbb{E}\{s\}$ stands for the statistical expectation of random variable $s$.

\vspace*{-10pt}
\section{System Model}
\label{Section:SysModel}

We consider an optical \gls*{swipt} system, where an FSO \gls*{tx} sends an intensity-modulated FSO signal focused on an optical \gls*{rx} equipped with a photovoltaic cell \cite{Mertens2014, Luque2010, Wang2015, Fakidis2020}, as shown in Fig.~\ref{Fig:SystemModel}.
We denote the carrier wavelength of the transmit signal by $\lambda_0$ and express the power spectral density of the stochastic process modelling the received light as a function of wavelength $\lambda$ and time $t, t \in [kT, (k+1)T)$, as follows:
\begin{equation}
	p_\text{r}(\lambda, t) = h s(t)  \delta(\lambda - \lambda_0)  + \tilde{p}_\text{a}(\lambda) + \tilde{w}(\lambda, t).
	\label{Eqn:ReceivedSpectrum}
\end{equation} 
Here, $s(t) = s[k] \psi(t-kT)$ is the power of the intensity-modulated optical signal, where $s[k], \forall k \in \mathbb{N},$ are information symbols transmitted in time slots $k, k\in \mathbb{N},$ and modelled as \gls*{iid} realizations of the non-negative random variable $s$, and $\psi(t)$ is a rectangular pulse of duration $T$, i.e., $\psi(t)$ takes value 1 if $t\in[0, T)$ and $0$, otherwise.
Since the intensity of the FSO TX signal is limited by hardware and eye safety constraints \cite{Ajam2020, Lapidoth2009, Lasersafety2014}, we limit the TX power by $A^2$, and thus, the pdf of $s$ satisfies $\text{dom} \{f_s\} \subseteq [0, A^2]$.
In (\ref{Eqn:ReceivedSpectrum}), $h$ and $\tilde{p}_\text{a}(\lambda)$ are the channel gain between TX and RX and the power spectral density of the ambient light, respectively, and are assumed to be known at both devices.
Finally, $\tilde{w}(\lambda, t)$ is the time-varying power spectral density of the noise received at the photovoltaic RX.
We note that the input RX noise is a non-stationary stochastic process caused by the random fluctuations of the ambient light and the intensity of the low-cost laser source \cite{Lapidoth2009}.
Furthermore, the output signal at the photovoltaic RX is additionally impaired by thermal and shot noise, which are generated by the resistances and p-n junctions of the RX, respectively \cite{Lapidoth2009, Wang2015}.
For the derivation of the EH model in Section~\ref{Section:EH}, we neglect the RX noise since its contribution to the average harvested power is negligible \cite{MaShuai2019, Tran2019, Sepehrvand2021, Wang2015}.
To analyse the performance of information transmission, in Section~\ref{Section:ID}, we assume that the thermal noise at the RX dominates and we model the equivalent noise, which impairs the output information symbols, as \gls*{awgn} \cite{Wang2015, Lapidoth2009}.

\vspace*{-10pt}
\section{EH at the Photovoltaic RX}
\label{Section:EH}
In this section, we study the \gls*{eh} at the photovoltaic RX.
To this end, we first present the \gls*{eec} of the RX, where the AC and DC components of the electrical output signal are separated and utilized for information reception and EH, respectively.
Next, we propose a model for characterizing the EH at the user device.

\paragraph*{EEC of the RX} In the following, we present the EEC of the photovoltaic cell employed at the RX and shown in Fig.~\ref{Fig:MjCell}.
First, we express the current $j(t)$ induced in the photovoltaic cell by the received light as follows \cite{Mertens2014}:\vspace*{-3pt}
\begin{equation}
	j(t) = \int p_\text{r}(\lambda, t) r(\lambda) d\lambda = h r_0 s(t) + {p}_\text{a} + w(t) ,
	\label{Eqn:InducedCurrent}\vspace*{-3pt}
\end{equation}
\noindent where $r(\lambda)$ is the wavelength dependent spectral response of the photovoltaic cell with $r_0 = r(\lambda_0)$ \cite{Mertens2014}, and $p_\text{a} = {\int \tilde{p}_\text{a}(\lambda) r(\lambda)  d\lambda }$ and $w(t) = {\int \tilde{w}(\lambda, t) r(\lambda)  d\lambda }$ are the photovoltaic currents due to the ambient light and the RX noise, respectively.

Next, we note that as shown in Fig.~\ref{Fig:MjCell}, the accurate EEC of a photovoltaic cell comprises two diodes that represent the diffusion current in the neutral region and particle recombination in the depletion region of the photovoltaic p-n junction, respectively \cite{Mertens2014, Luque2010}.
In particular, at low received light intensities, the diffusion current is small compared to the particle recombination in the depletion region and the corresponding recombination current $i^\text{d}_2$ can be modelled by a diode with ideality factor $2$ \cite{Luque2010}.
However, if the received light intensity is high, the diffusion current $i^\text{d}_1$ dominates and can be modelled by a diode with ideality factor $1$ \cite{Luque2010}.
Furthermore, we model the parasitic series and shunt resistances of the p-n junction by resistors $R_\text{s}$ and $R_\text{sh}$ in Fig.~\ref{Fig:MjCell}, respectively.
\begin{figure}[!t]
	\centering
	\includegraphics[draft=false, width=0.75\linewidth]{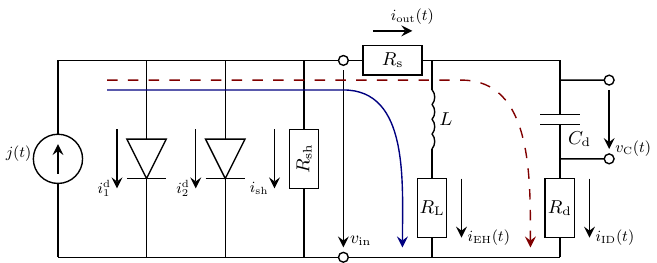}
	\vspace*{-5pt}
	\caption{EEC of the optical photovoltaic RX \cite{Mertens2014}, where the DC (blue solid line) and AC (red dashed line) components of the electrical output signal are separated and utilized for \gls*{eh} and information decoding, respectively \cite{Wang2015, Fakidis2020}.}
	\label{Fig:MjCell}
	\vspace*{-10pt}
\end{figure}

To efficiently harvest power and decode information from the received optical signal, similar to \cite{Tran2019, Wang2015}, we split the flows of the AC and DC signal components via an R-L low-pass filter and an R-C high-pass filter and utilize these signal components for information reception and EH, respectively \cite{Horowitz1989}.
In particular, the DC signal component is harvested at a resistive EH load $R_\text{L}$, whereas the AC component is received at the information RX modelled by resistance $R_\text{d}$, as shown in Fig.~\ref{Fig:MjCell}.
As in \cite{Tran2019, Wang2015}, we assume ideal lossless high- and low-pass electrical filters, i.e., we neglect the parasitic resistances of inductance $L$ and capacitance $C_\text{d}$ and suppose that the DC and AC currents flow only through $R_\text{L}$ and $R_\text{d}$, respectively.

We note that the time interval $T$, capacitance $C_\text{d}$ and inductance $L$, see Fig.~\ref{Fig:MjCell}, can always be tuned to avoid undesired memory effects due to charging and discharging of reactive elements, i.e., capacitances and inductances, in the photovoltaic RX \cite{Horowitz1989}.
Thus, after a transient phase that we assume to be negligible, in each time slot $k, \forall k,$ the photovoltaic RX reaches a steady state and we can neglect the dependence of the output currents $i_\text{out}(t)$, $i_\text{EH}(t)$, and $i_\text{ID}(t)$ in time slot $k, \forall k,$ on information symbols transmitted prior to that time slot, i.e., symbols $s[p]$, $p < k$, $p \in \mathbb{N}$ \cite{Tran2019, Wang2015}, as shown in Fig.~\ref{Fig:InpulseResponse}.

\paragraph*{Derivation of the EH model} In the following, we derive an EH model for the considered photovoltaic cell.
First, as in \cite{MaShuai2019, Tran2019, Sepehrvand2021, Wang2015} and other related works, we neglect the impact of noise on the power harvested at the RX.
Next, since an ideal low-pass R-L filter is assumed for EH, we neglect the ripples of the voltage across the EH load resistance $R_\text{L}$ during the reception of symbol $s[k], \forall k$ \cite{Tietze2012, Sepehrvand2021, Wang2015, Fakidis2020}.
Thus, in steady state, we have $i_\text{ID}(t) \approx 0$ and the equivalent current induced in time slot $k, \forall k$, and the corresponding current flow through the resistance $R_\text{L}$ are given by $j[k] \triangleq j(kT) = r_0 h s[k] + p_\text{a}$ and $i_\text{EH}[k] \triangleq i_\text{EH}(kT) = i_\text{out} (kT)$, respectively.

In the following, for a given power of the FSO received signal\footnotemark\hspace*{0pt} $p = h s \geq 0$, we characterize the corresponding instantaneous harvested power $P_\text{harv}$ at EH load resistance $R_\text{L}$.
\footnotetext{For simplicity of presentation, in this section, we drop the time slot index~$k$.}
To this end, we express the output current $i_\text{EH}$ of the photovoltaic cell as follows \cite{Mertens2014, Tietze2012}:\vspace*{-4pt}
\begin{align}
	i_{\text{EH}} &= j - i^\text{d}_{1} - i^\text{d}_{2}  - i_{\text{sh}} \nonumber\\ 
	&=  j - I_\text{s}^1 (e^{\frac{ v_\text{in} }{V_T}} - 1) - I_\text{s}^2 (e^{\frac{ v_\text{in} }{2 V_T}} - 1) - \frac{ v_\text{in} }{R_{\text{sh}}},\vspace*{-4pt}
	\label{Eqn:OutputCurrent_Junction}
\end{align}
where $I_\text{s}^{n}, n\in\{1,2\}$, and $V_T = \SI{25.85}{\milli\volt}$ are the reverse-bias saturation current of diode $n$ and the thermal voltage of the diode, respectively, and $v_\text{in}$ is the input voltage at the RX, see Fig.~\ref{Fig:MjCell}.
Here, $i^\text{d}_{1}$ and $i^\text{d}_{2}$ model the diffusion current and particle recombination in the depletion region of the p-n junction, respectively \cite{Luque2010}.

\begin{figure}[!t]
	\centering
	\includegraphics[draft=false, width=0.75\linewidth]{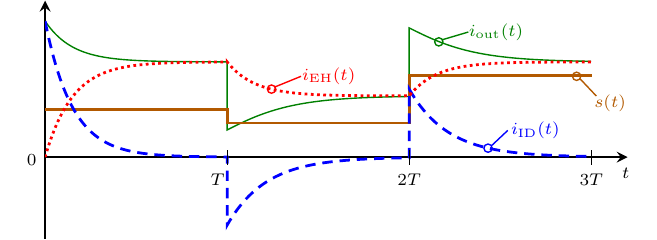} \vspace*{-7pt}
	\caption{Transmit signal $s(t)$ and currents $i_\text{out}(t)$, $i_\text{ID}(t)$, and $i_\text{EH}(t)$ obtained for the EEC in Fig.~\ref{Fig:MjCell} with the circuit parameters specified in Table~\ref{Table:SimulationSetup}.}
	\label{Fig:InpulseResponse}\vspace*{-10pt}
\end{figure}

\begin{table}[!t]
	\centering
	\caption{Parameters of the RX circuit in Fig.~\ref{Fig:MjCell} \cite{Wang2015}}\vspace*{-5pt}
	\begin{tabular}{|l|l|}
		\hline
		$C_\text{d} = \SI{2.5}{\micro\farad}$&
		$R_\text{d} = \SI{10}{\kilo\ohm}$\\
		$L = \SI{10}{\milli\henry}$&
		$R_\text{L} = \SI{10}{\kilo\ohm}$\\
		$R_\text{sh} = \SI{100}{\mega\ohm}$&
		$R_\text{s} = \SI{100}{\ohm}$\\
		\hline
		\multicolumn{2}{|c|}{ \makecell{ $I_\text{s} = I_\text{s}^1 = I_\text{s}^2 = \SI{1}{\nano\ampere}$ } }  
		\\
		\hline
	\end{tabular}
	\label{Table:SimulationSetup}
	\vspace*{-10pt}
\end{table}
Next, we assume that for large shunt resistances $R_\text{sh}$, the current leakage $i_\text{sh}$ is negligible compared to the current flows through the diodes, i.e., $i_\text{sh} \ll i^\text{d}_{n}, \forall n$ \cite{Tietze2012}.
Additionally, we assume that the saturation currents of the diodes modelling the photovoltaic cell are identical\footnotemark, i.e., $I_\text{s}^{1} = I_\text{s}^{2}  = I_\text{s}$.
\footnotetext{If the saturation currents are not identical, one can adopt the current $I_\text{s} \neq I_\text{s}^{1} \neq I_\text{s}^{2}$ that minimizes the expression $ \int_{v_\text{in} \in \mathcal{V}} |I_\text{s}^1 e^{\frac{ v_\text{in} }{V_\text{T}}} + I_\text{s}^2 e^{\frac{ v_\text{in} }{2V_\text{T}}} - I_\text{s} (e^{\frac{ v_\text{in} }{V_\text{T}}} + e^{\frac{ v_\text{in} }{2V_\text{T}}})| \text{d} v_\text{in}$ for the operating range of input voltages $v_\text{in} \in \mathcal{V}$.}
Thus, we can rewrite (\ref{Eqn:OutputCurrent_Junction}) as follows:\vspace*{-3pt}
\begin{equation}
	I_\text{s} e^{\frac{ v_\text{in} }{V_\text{T}}}  +I_\text{s} e^{\frac{ v_\text{in} }{2 V_\text{T}}} + i_{\text{EH}} - j - 2 I_\text{s} = 0.\vspace*{-3pt}
	\label{Eqn:OutCurrentQuadratic}
\end{equation}

Next, we obtain the output voltage $v_\text{in}$ as solution of (\ref{Eqn:OutCurrentQuadratic}), which is quadratic in $e^{\frac{ v_\text{in} }{2 V_\text{T}}}$, as follows:\vspace*{-3pt}
\begin{align}
	v_\text{in} &= \;2 V_\text{T} \ln \bigg( \frac{ - I_\text{s} + \sqrt{ I_\text{s}^2 - 4 I_\text{s} (i_{\text{EH}} - j - 2 I_\text{s} )  } }{ 2 I_\text{s} } \bigg) \nonumber \\
	& \approx V_\text{T} \ln \bigg(  \frac{ j - i_{\text{EH}} }{I_\text{s}} + 1 \bigg),
	\label{Eqn:SolQuadEqn}\vspace*{-3pt}
\end{align}
\noindent where the last approximation is based on $\sqrt{x+2.25} - 0.5 \approx \sqrt{x+1}, \forall x \geq 0$.

As we neglect the ripples of the voltage across the EH load resistance $R_\text{L}$, the input voltage $v_\text{in}$ can be expressed as $v_\text{in} = i_\text{EH} R_{\Sigma}$, where $R_{\Sigma} = R_\text{L} + R_\text{s}$, see Fig.~\ref{Fig:MjCell}.
Thus, solving (\ref{Eqn:SolQuadEqn}) for $i_\text{EH}$, the output current $i_\text{EH}$ is given by (\ref{Eqn:EHCurrent_SingleJunction}) shown on top of the next page, where $W_0(\cdot)$ is the principal branch of the Lambert-W function.
Finally, the harvested power $P_\text{harv} = i^2_\text{EH} R_\text{L}$ can be obtained in closed form as a function of the received FSO power $p$ and current $p_\text{a}$ induced by the ambient light and is given by (\ref{Eqn:EHmodel}) shown on top of the next page.
\begin{figure*}[!t]
	\vspace*{0.1in}
	\begin{equation}
		i_{\text{EH}} = j + I_\text{s} - \frac{V_\text{T}}{R_{\Sigma}} W_0 \bigg( I_\text{s} \frac{R_{\Sigma}}{V_\text{T}}  \exp \bigg[\frac{R_{\Sigma}}{V_\text{T}} (j + I_\text{s}) \bigg] \bigg) 
		\label{Eqn:EHCurrent_SingleJunction}\vspace*{-7pt}
	\end{equation}
	\begin{equation}
		P_\text{harv}(p, p_\text{a}) = R_\text{L} \bigg[p r_0 + p_\text{a} + I_\text{s} - \frac{V_\text{T}}{R_{\Sigma}} W_0 \bigg( I_\text{s} \frac{R_{\Sigma}}{V_T}  \exp \Big[\frac{R_{\Sigma}}{V_\text{T}} (p r_0 + p_\text{a} + I_\text{s}) \Big] \bigg)\bigg]^2 \vspace*{-5pt}
		\label{Eqn:EHmodel}
	\end{equation}\hrulefill \vspace*{-15pt}
\end{figure*}

The derived EH model in (\ref{Eqn:EHmodel}) characterizes the harvested power at the photovoltaic RX for any values of the received FSO power $p$ and current $p_\text{a}$ induced by the ambient light.
Furthermore, for large received FSO powers, as $p \to \infty$, (\ref{Eqn:EHmodel}) asymptotically converges to the EH model derived in \cite{Wang2015}, where the EEC of the photovoltaic RX comprises only one diode, which models the diffusion current, and the recombination of charges in the depletion region of the p-n junction is neglected.

\vspace*{-10pt}
\section{Information Reception at the RX}
\label{Section:ID}
In the following, we discuss the information reception at the photovoltaic RX.
To this end, we first show that adopting the optimal filter for the output information signal may not be practical for non-linear photovoltaic RXs, and therefore, we design a suboptimal information RX.
Next, we derive the optimal transmit signal distribution maximizing the achievable information rate.
 
\paragraph*{Information RX} 
As in \cite{Tran2019, Wang2015}, we model the information RX by resistance $R_\text{d}$ and the current flow $i_\text{ID}(t)$ shown in Fig.~\ref{Fig:MjCell} is used to decode the received information signal.
Since only the \gls*{ac} component of the received FSO signal is utilized for information reception, the current flow $i_\text{ID}(t)$ can be expressed as \cite{Horowitz1989}:\vspace*{-4pt}
\begin{equation}
	i_\text{ID}(t) =  i_\text{ID}^\text{s} (t) + i_\text{ID}^\text{n} (t),
	\label{Eqn:IdCurrent}\vspace*{-4pt}
\end{equation}
where $i_\text{ID}^\text{s} (t)$ is the output current due to the received information signal $s(t)$ and depends on the parameters of the RX and the power spectral density of the received light $p_\text{r} (\lambda, t)$ \cite{Horowitz1989}.
Furthermore, $i_\text{ID}^\text{n} (t)$ is the equivalent noise at resistance $R_\text{d}$ that comprises the contributions of both the received noise $\tilde{w}(t)$ and the noise generated by the elements of the photovoltaic RX \cite{Mertens2014, Wang2015}.
We note that to demodulate the transmitted message, one can design a filter that is matched to the output information signal and yields the maximum achievable output \gls*{snr} after sampling \cite{Horowitz1989, Tse2005}.
However, the design and practical realization of such a filter may not be feasible due to the non-linearity of the photovoltaic RX, i.e., the non-linear dependency of $i_\text{ID}^\text{s} (t)$ on the spectral density of the received light $p_\text{r}(\lambda, t)$, and the high required sampling rate for an accurate discretization of the output signal $i_\text{ID}(t)$ \cite{Horowitz1989, Wang2015}.
Therefore, in the following, we resort to a suboptimal information RX.

To this end, we integrate the voltage $v_\text{d}(t) = i_\text{ID}(t) R_\text{d}$ across resistance $R_\text{d}$ over time slot $k, k\in\mathbb{N},$ and obtain:\vspace*{-3pt}
\begin{align}
	r[k] &=  \int_{(k-1)T}^{kT} R_\text{d} i_\text{ID}(t) \text{d} t = \int_{(k-1)T}^{kT} R_\text{d} C_\text{d} \frac{\text{d} v_\text{C}(t)  }{\text{d} t} \text{d} t
	\nonumber \\ &= \tilde{x}[k] - \tilde{x}[k-1] + \tilde{n}[k] - \tilde{n}[k-1],
	\label{Eqn:FilterOutputGeneral}\vspace*{1in}
\end{align}
\noindent where $v_\text{C}(t) = v^\text{s}_\text{C}(t) + v^\text{n}_\text{C}(t)$ is the voltage across the capacitance $C_\text{d}$ of the high-pass filter, as shown in Fig.~\ref{Fig:MjCell},~and $v^\text{s}_\text{C}(t)$ and $v^\text{n}_\text{C}(t)$ are the components of $v_\text{C}(t)$ corresponding to the received FSO signal $h s(t)$ and noise $w(t)$, respectively.
Furthermore, in (\ref{Eqn:FilterOutputGeneral}), $\tilde{x}[k] = R_\text{d} C_\text{d} v^\text{s}_\text{C}(kT)$ and $\tilde{n}[k] = R_\text{d} C_\text{d} v^\text{n}_\text{C}(kT)$ are the information signal and the noise in time slot $k$ at the output of the RX, respectively.

Since, in the steady state, the current flow $i_\text{ID}(kT) \approx 0, \forall k$, as shown Fig.~\ref{Fig:InpulseResponse}, voltage $v_\text{C}(kT)$ is equal to the voltage $i_\text{EH}[k] R_\text{L}$ across the EH load resistance $R_\text{L}$ in time slot $k$ \cite{Horowitz1989}, see Fig.~\ref{Fig:MjCell}.
Thus, we obtain the information symbol received in time slot $k, \forall k,$ as \vspace*{-3pt}
\begin{equation}
	\tilde{x}[k] = R_\text{d}C_\text{d} v^\text{s}_\text{C}(kT) = R_\text{d}C_\text{d} \sqrt{R_\text{L}} \sqrt{P_\text{harv}(h s[k], p_\text{a} )},\vspace*{-3pt}
\end{equation}
where $P_\text{harv}(\cdot, \cdot)$ is the power harvested at the EH load derived in Section~\ref{Section:EH} and given in (\ref{Eqn:EHmodel}).

We note that the output symbol $r[k], \forall k,$ in (\ref{Eqn:FilterOutputGeneral}) depends not only on $\tilde{x}[k]$ but also on $\tilde{x}[k-1]$.
To avoid the undesired memory of the photovoltaic RX, we assume that a sequence of symbols $\boldsymbol{s} = \{s[0], s[1], \cdots, s[K-1]\}$ of length $K \geq 1$ is transmitted.
Then, we obtain the normalized output symbol $y[k]$ in time slot $k, \forall k$, as follows\footnotemark:\vspace*{-5pt}
\begin{equation}
	y[k] = \frac{1}{R_\text{d}C_\text{d} \sqrt{R_\text{L}}} \sum_{p=0}^k r[p] = x[k] + n[k],
	\label{Eqn:CommChannel}\vspace*{-5pt}
\end{equation}
where $x[k] = \sqrt{P_\text{harv}(h s[k], p_\text{a} )}$ and $n[k] = \frac{1}{ R_\text{d}C_\text{d} \sqrt{R_\text{L}} } \tilde{n}[k]$ are the normalized output information signal and noise, respectively.
\footnotetext{We note that if the information RX is able to measure the voltage $v_\text{C}(t)$, the integration of the current flow $i_\text{ID}(t)$ in (\ref{Eqn:FilterOutputGeneral}) is not needed and the output symbol can be directly obtained as $y[k] = R_\text{L}^{-\frac{1}{2}} v_\text{C}(kT), \forall k$.}
Thus, unlike for optical RXs based on photodetectors in \cite{MaShuai2019, Ajam2020}, the output information signal $x[k]$ at the photovoltaic RX is not a linear function of the TX symbol $s[k], \forall k,$ but is determined by the non-linear function in (\ref{Eqn:EHmodel}).

We note that the output noise samples ${n}[k], \forall k \in\mathbb{N},$ include the impacts of the input RX noise and the thermal and shot noise of the photovoltaic RX \cite{Lapidoth2009}.
To model ${n}[k], \forall k$, we assume that the thermal noise at the RX dominates and the impact of transmit information symbol $s[k]$ on $n[k]$ is negligible \cite{Wang2015}.
Thus, we model ${n}[k], \forall k,$ as \gls*{iid} realizations of AWGN with variance $\sigma^2$ \cite{Lapidoth2009, Wang2015}.

\paragraph*{Achievable Information Rate} We note that in practical optical SWIPT systems, photovoltaic RXs can harvest power from different light sources, as, for example, sunlight, indoor illumination, and received FSO signals \cite{Mertens2014, Luque2010, MaShuai2019, Tran2019, Sepehrvand2021, Wang2015, Fakidis2020}.
However, information transmission to the optical non-linear photovoltaic RXs is challenging and constitutes a bottleneck in practical systems \cite{Sepehrvand2021, Wang2015, Fakidis2020}.
Therefore, in this paper\footnotemark, we design the optical SWIPT system assuming that the FSO TX aims at the maximization of the information rate between the TX and RX signals, while the RX also opportunistically harvests the power of the received FSO signal.
\footnotetext{We note that determining the tradeoff between the achievable information rate and the harvested power at the photovoltaic RX is an interesting direction for future work and is not tackled in this paper.}
In the following proposition, for a given maximum TX power $A^2$, we determine the optimal distribution of the transmit symbols $s$ for the maximization of the achievable rate of the considered FSO system employing a non-linear photovoltaic RX.
\begin{proposition}
	For a given cdf of the transmit information symbols, $F_s(s)$, an achievable rate in nats per channel use is given by\vspace*{-3pt}
	\begin{equation}
		R(F_s) = \frac{1}{2} \ln \big[ 1+ \frac{e^{2u(x; F_s)}}{ 2\pi e \sigma^2 } \big],
		\label{Eqn:RateBoundGeneral}\vspace*{-3pt}
	\end{equation}
	\noindent where $u(x; F_s)$ is the differential entropy of $x = \sqrt{P_\text{\upshape harv}(h s, p_\text{\upshape a} )}$ for a given $F_s(\cdot)$.
	Furthermore, the achievable rate in (\ref{Eqn:RateBoundGeneral}) is maximized by the following cdf of the transmit information symbols\vspace*{-3pt}
	\begin{equation}
		F^*_s(s) = \begin{cases}
			0, \quad & \text{\upshape if} \, s<0, \\
			 \frac{ P_\text{\upshape harv}(hs, p_\text{\upshape a}) - P_\text{\upshape harv}(0, p_\text{\upshape a}) }{ P_\text{\upshape harv}(hA^2, p_\text{\upshape a}) - P_\text{\upshape harv}(0, p_\text{\upshape a}) } , \quad &\text{\upshape if} \, s\in[0, A^2], \\
			1, \quad & \text{\upshape if} \, s>A^2,\vspace*{-3pt}
		\end{cases}
		\label{Eqn:CapacityAchievingDistributionS}
	\end{equation}
	\noindent and can be expressed as a function of $A^2$ as follows:\vspace*{-3pt}
	\begin{equation}
		\bar{R}(A^2) = \frac{1}{2} \ln \Big( 1 + \frac{ P_\text{\upshape harv}(hA^2, p_\text{\upshape a}) - P_\text{\upshape harv}(0) }{2 \pi e \sigma^2 } \Big). \vspace*{-3pt}
		\label{Eqn:RateBound}
	\end{equation}
	\label{Theorem:Capacity}\vspace*{-10pt}
\end{proposition}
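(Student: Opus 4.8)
\noindent\emph{Proof approach.}
The plan is to read \eqref{Eqn:CommChannel} as a memoryless additive‑noise channel $y = x + n$, whose input $x=\sqrt{P_\text{harv}(hs,p_\text{a})}$ is a deterministic function of the transmit symbol $s$ (with $\text{dom}\{f_s\}\subseteq[0,A^2]$) and whose noise $n$ is zero‑mean Gaussian with variance $\sigma^2$, and then to (i) lower‑bound the mutual information of this channel for an arbitrary input law and (ii) maximise that bound over the feasible input laws.

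For step (i): since conditioning on $s$ fixes $x$, we have $h(y\mid s)=h(y\mid x)=h(n)=\tfrac12\ln(2\pi e\sigma^2)$, hence $I(s;y)=I(x;y)=h(x+n)-\tfrac12\ln(2\pi e\sigma^2)$. Applying the scalar entropy power inequality to the independent sum $x+n$ gives $e^{2h(x+n)}\ge e^{2h(x)}+e^{2h(n)}=e^{2u(x;F_s)}+2\pi e\sigma^2$, and inserting this into $I(x;y)$ while simplifying the logarithm yields exactly \eqref{Eqn:RateBoundGeneral}. Since $R(F_s)\le I(s;y)$ and \eqref{Eqn:CommChannel} is a memoryless channel with i.i.d.\ Gaussian noise, $R(F_s)$ is an achievable rate by a standard random‑coding argument; degenerate $F_s$, for which $h(x)=-\infty$, are covered by the convention $R(F_s)=0$.

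For step (ii): $R(F_s)$ in \eqref{Eqn:RateBoundGeneral} is strictly increasing in $u(x;F_s)$, so it suffices to maximise the differential entropy of $x$ over all admissible input laws. The first ingredient is to show that $p\mapsto P_\text{harv}(p,p_\text{a})$ is continuous and strictly increasing on $[0,hA^2]$; this I would extract from \eqref{Eqn:EHCurrent_SingleJunction}--\eqref{Eqn:EHmodel}, using that the argument of $W_0$ is increasing in $j=pr_0+p_\text{a}$, that $W_0$ is increasing, and that a short computation gives $\text{d} i_\text{EH}/\text{d} j>0$ with $i_\text{EH}>0$, so that $P_\text{harv}=i_\text{EH}^2R_\text{L}$ inherits the strict monotonicity. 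Consequently $s\mapsto x$ is a continuous strictly increasing bijection of $[0,A^2]$ onto $[\sqrt{P_\text{harv}(0,p_\text{a})},\sqrt{P_\text{harv}(hA^2,p_\text{a})}]$, so the law of $x$ can be prescribed to be any distribution supported on that interval, and conversely every such law corresponds, via the inverse map, to an admissible $F_s$. The second ingredient is the maximum‑entropy fact $h(X)\le\ln(b-a)$ for $X$ supported on $[a,b]$, with equality iff $X$ is uniform; this identifies the uniform distribution over the range of $x$ as the maximiser. Translating this optimal law back to the transmit symbol through the monotone change of variables gives the cdf \eqref{Eqn:CapacityAchievingDistributionS}, and substituting the associated maximal differential entropy $u^{*}$ into \eqref{Eqn:RateBoundGeneral} yields \eqref{Eqn:RateBound}.

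The only non‑routine step, in my view, is establishing the continuity and strict monotonicity of $P_\text{harv}(\cdot,p_\text{a})$ from the Lambert‑$W$ form of the EH model: this is what makes both the identification of the range of $x$ and the change of variables legitimate. Once that is in place, the entropy‑power bound, the reduction to a maximum‑entropy problem, the uniform maximiser, and the final substitution are all routine.
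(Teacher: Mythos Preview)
Your approach is precisely the one the paper takes: the entropy power inequality applied to $y=x+n$ gives \eqref{Eqn:RateBoundGeneral}, and then the amplitude constraint on $x$ together with the max-entropy characterisation of the uniform law identifies the optimiser. You are in fact more careful than the paper in one respect: you explicitly argue the strict monotonicity of $p\mapsto P_\text{harv}(p,p_\text{a})$ from \eqref{Eqn:EHCurrent_SingleJunction}--\eqref{Eqn:EHmodel}, which the paper simply takes for granted when writing down the range of $x$.

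There is, however, a slip in your final sentence (and, it appears, in the statement itself). With $x$ uniform on $[\sqrt{P_\text{harv}(0,p_\text{a})},\sqrt{P_\text{harv}(hA^2,p_\text{a})}]$ one has
\[
e^{2u^{*}} \;=\; \Big(\sqrt{P_\text{harv}(hA^2,p_\text{a})}-\sqrt{P_\text{harv}(0,p_\text{a})}\Big)^{2},
\]
which is \emph{not} $P_\text{harv}(hA^2,p_\text{a})-P_\text{harv}(0,p_\text{a})$ unless $P_\text{harv}(0,p_\text{a})=0$. Likewise the pushforward of the uniform law through the inverse of $s\mapsto\sqrt{P_\text{harv}(hs,p_\text{a})}$ gives
\[
F_s^*(s)=\frac{\sqrt{P_\text{harv}(hs,p_\text{a})}-\sqrt{P_\text{harv}(0,p_\text{a})}}{\sqrt{P_\text{harv}(hA^2,p_\text{a})}-\sqrt{P_\text{harv}(0,p_\text{a})}},
\]
not \eqref{Eqn:CapacityAchievingDistributionS}. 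So your argument is correct right up to the last substitution, but that substitution does not reproduce \eqref{Eqn:CapacityAchievingDistributionS}--\eqref{Eqn:RateBound} as written; the discrepancy is a missing square root, and the paper's own proof has the same gap.
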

\begin{proof}
	Please refer to Appendix~\ref{Appendix:ProofCapacity}.
\end{proof}\vspace*{-5pt}

Proposition~\ref{Theorem:Capacity} shows that in contrast to linear optical information RXs based on photodetectors \cite{MaShuai2019, Ajam2020}, uniformly distributed transmit symbols $s$ are not optimal for non-linear photovoltaic RXs.
Furthermore, we note that the cdf $F^*_s(\cdot)$ in (\ref{Eqn:CapacityAchievingDistributionS}), and thus, the achievable information rate in (\ref{Eqn:RateBound}) depend on the EH model in (\ref{Eqn:EHmodel}).
Moreover, since $P_\text{harv}(0, p_\text{a}) > 0$ for $p_\text{a} > 0$, the achievable information rate depends not only on the information symbols $s$, but also on the power spectral density $\tilde{p}_\text{a}(\lambda)$ of the ambient light.
Thus, an accurate modelling of optical RXs is important to characterize not only the harvested power, but also the performance of the information RX.

\vspace*{-12pt}
\section{Numerical Results}
	\label{Section:NumResults}

\begin{figure}[!t]
	\centering
	\includegraphics[draft = false, width=0.3\textwidth]{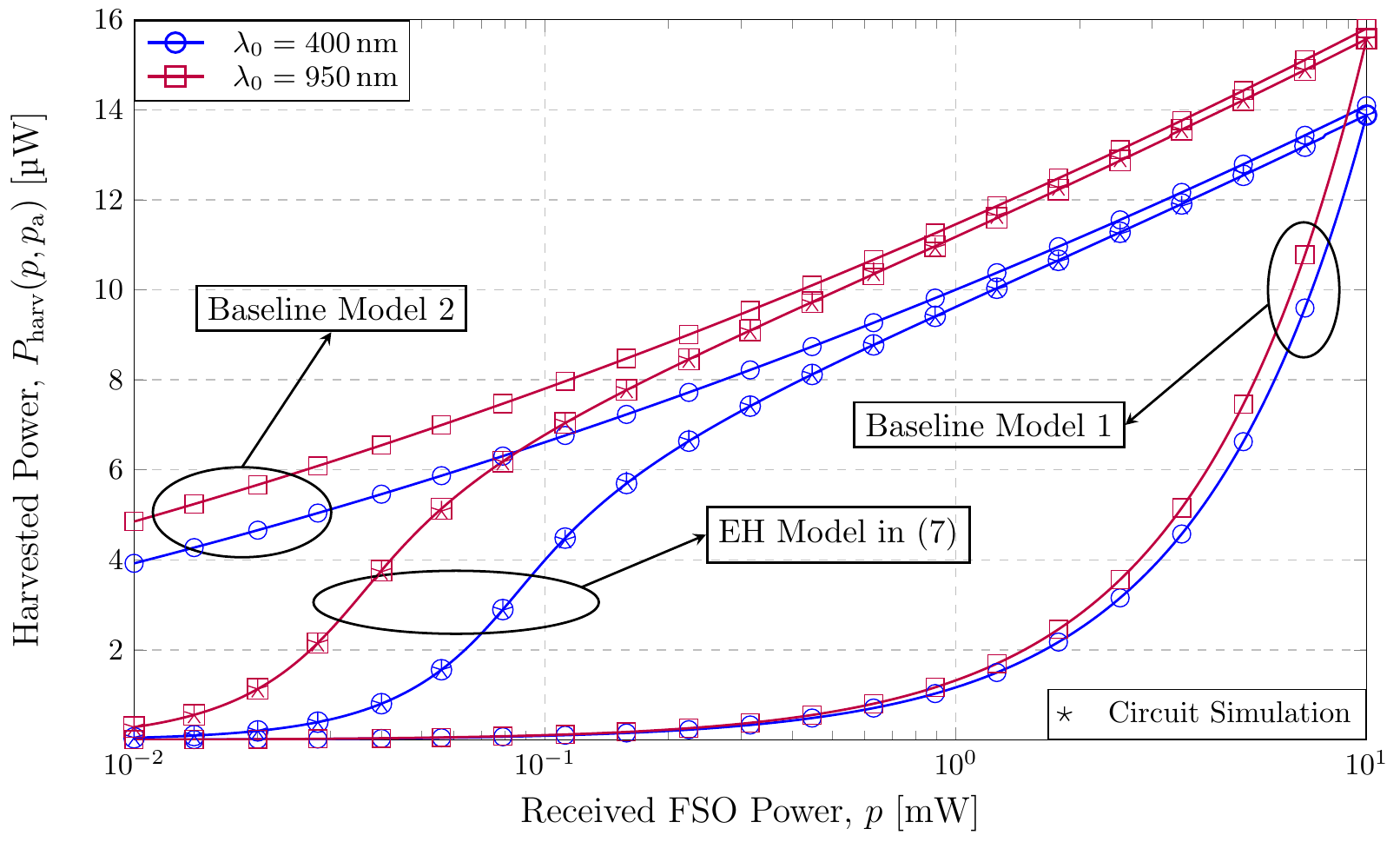}
	\vspace*{-5pt}
	\caption{Validation of the derived EH model via circuit simulations for different wavelength $\lambda_0$ of the transmitted FSO signal.}
	\label{Fig:ValidationEhModel}
	\vspace*{-10pt}
\end{figure}

In this section, we validate the derived EH model and evaluate the SWIPT system performance via simulations.
The adopted parameters of the photovoltaic RX circuit shown in Fig.~\ref{Fig:MjCell} are specified in Table~\ref{Table:SimulationSetup}.
The spectral response function of the photovoltaic RX in (\ref{Eqn:InducedCurrent}) is given by $r(\lambda) = \lambda \mu$, where $\mu = 0.7 \frac{q_0}{\eta c_l}$ is the conversion efficiency of the p-n junction at the RX \cite{Luque2010, Mertens2014}.
Here, constants $q_0$, $\eta$, and $c_l$ are the elementary charge, Planck's constant, and the speed of light, respectively.

In the following, we validate the non-linear EH model derived in Section~\ref{Section:EH} via circuit simulations.
To this end, in Fig.~\ref{Fig:ValidationEhModel}, we plot the harvested powers $P_\text{harv}(p, p_\text{a})$ obtained analytically by (\ref{Eqn:EHmodel}) for different values of $p$ and two wavelengths of the received FSO signal, $\lambda_0 \in \{\SI{400}{\nano\meter}, \SI{950}{\nano\meter}\}$.
Furthermore, for different values of $p$ and $\lambda_0$, we also simulate the RX EEC \cite{Mertens2014} shown in Fig.~\ref{Fig:MjCell} with the circuit simulation tool Keysight ADS \cite{ADS2017} and show the obtained harvested powers as star markers in Fig.~\ref{Fig:ValidationEhModel}.
Additionally, as Baseline Model 1 and Baseline Model 2, we adopt the EH models derived in \cite{MaShuai2019} and \cite{Wang2015} assuming MPP tracking at the RX and a single-diode EEC of the photovoltaic RX, respectively.
In particular, for Baseline Model 1, the model parameters are chosen to match the harvested power $P_\text{harv}(p, p_\text{a})$ in (\ref{Eqn:EHmodel}) for $p = \SI{10}{\milli\watt}$.
Finally, to investigate the EH performance of photovoltaic RXs not only in the high input power regime, but also when the received optical power is low, we set $p_\text{a} = 0$ in (\ref{Eqn:EHmodel}).

First, in Fig.~\ref{Fig:ValidationEhModel}, we observe that as expected, the harvested power $P_\text{harv} (p, p_\text{a})$ at the RX grows with the received FSO power $p$.
Next, we note that for all $p$ and $\lambda_0$, the results obtained with the derived EH model in (\ref{Eqn:EHmodel}) match the ADS circuit simulation results well.
Furthermore, since the spectral response of the photovoltaic cell $r_0$ grows linearly with $\lambda_0$, higher FSO wavelengths yield larger values of the harvested power in Fig.~\ref{Fig:ValidationEhModel}.
Also, we observe that the EH model in \cite{MaShuai2019} assumes a tunable EH load resistance $R_\text{L}$ to maximize the EH efficiency for a given received FSO power $p$, and thus, Baseline Model 1 is not able to capture the non-linearities of the photovoltaic RX since the parameters of the EH load are fixed.
Finally, for the EH model in \cite{Wang2015}, the EEC comprises only one diode to model the diffusion current of the p-n junction at the RX.
Thus, Baseline Model 2 can not accurately capture the RX non-linearities when the received FSO power is low and the harvested power is determined by the recombination of particles in the depletion region of the junction \cite{Luque2010}.

In Figs.~\ref{Fig:Plot_CDFs} and \ref{Fig:Plot_AchievableRates}, we investigate the performance of SWIPT systems with non-linear photovoltaic RXs.
To this end, in Fig.~\ref{Fig:Plot_CDFs}, we plot the cdfs $F^*_s$ in (\ref{Eqn:CapacityAchievingDistributionS}) maximizing the achievable information rate $R(F_s)$ in (\ref{Eqn:RateBoundGeneral}) for $p_\text{a} = 0$ and for different values of the maximum power of the transmit FSO signal $A^2$.
Furthermore, in Fig.~\ref{Fig:Plot_CDFs}, as a baseline scheme, we also plot the cdfs corresponding to uniformly distributed $s \in [0, A^2]$, which maximize the achievable rate if, as in \cite{MaShuai2019, Tran2019, Sepehrvand2021, Wang2015, Fakidis2020}, the non-linearity of the photovoltaic RX with respect to the received information signal is neglected.
Next, in Fig.~\ref{Fig:Plot_AchievableRates}, for normalized channels with $h = 1$, we plot the maximum achievable rates in (\ref{Eqn:RateBound}) for cdf $F_s^*$ and the achievable rates in (\ref{Eqn:RateBoundGeneral}) for uniformly distributed information symbols $s$ for different values of $A$ and $p_\text{a}$.
For the results in Fig.~\ref{Fig:Plot_AchievableRates}, we set the AWGN variance and FSO wavelength to $\sigma^2 = \SI{-60}{\dBm}$ and $\lambda_0 = \SI{950}{\nano\meter}$, respectively.

First, we observe in Fig.~\ref{Fig:Plot_AchievableRates} that since the harvested power in Fig.~\ref{Fig:ValidationEhModel} saturates for large input power values $p$, higher currents $p_\text{a}$ induced by the ambient light, and thus, higher intensities of the ambient light $\tilde{p}_\text{a}(\lambda)$, lead to lower achievable information rates $R(\cdot)$.
Next, we note that for all considered values of $A^2$ and $p_\text{a}$, the optimal cdf $F_s^*$ yields higher values of $R(\cdot)$ compared to a uniformly distributed TX signal.
Furthermore, while the maximum rate $\bar{R}(A^2)$ always increases with $A^2$, the achievable information rate for the uniformly distributed TX signal may even decrease when $A^2$ grows.
Thus, we conclude that for optical SWIPT systems, the accurate modelling of photovoltaic RXs is important not only for the precise characterization of the harvested power, but also for reliable communication.

\vspace*{-12pt}
\section{Conclusions}
	\label{Section:Conclusions}
	In this work, we studied optical SWIPT systems, where a laser source emitted an intensity-modulated FSO signal focused on a photovoltaic RX.
First, by carefully analyzing the EEC of the non-linear photovoltaic RX, we derived an accurate closed-form EH model that characterizes the instantaneous harvested power at the RX.
Next, taking into account the non-linearity of the photovoltaic RX, we derived the optimal distribution of the transmit signal maximizing the achievable information rate between FSO TX and RX.
We validated the proposed EH model with circuit simulations.
Furthermore, we showed that in contrast to the derived EH model, two baseline EH models based on MPP tracking at the RX and a single-diode model of the p-n junction, respectively, are not able to capture the non-linearity of photovoltaic RXs in the low and high input power regimes, respectively.
We demonstrated that the achievable information rate decreases and increases when the intensity of the ambient light and the maximum intensity of the transmit FSO signal grow, respectively.
Finally, we observed that the derived optimal distribution yields significantly higher achievable information rates than uniformly distributed TX signals.

\begin{figure}[!t]
	\centering
	\includegraphics[draft = false, width=0.3\textwidth]{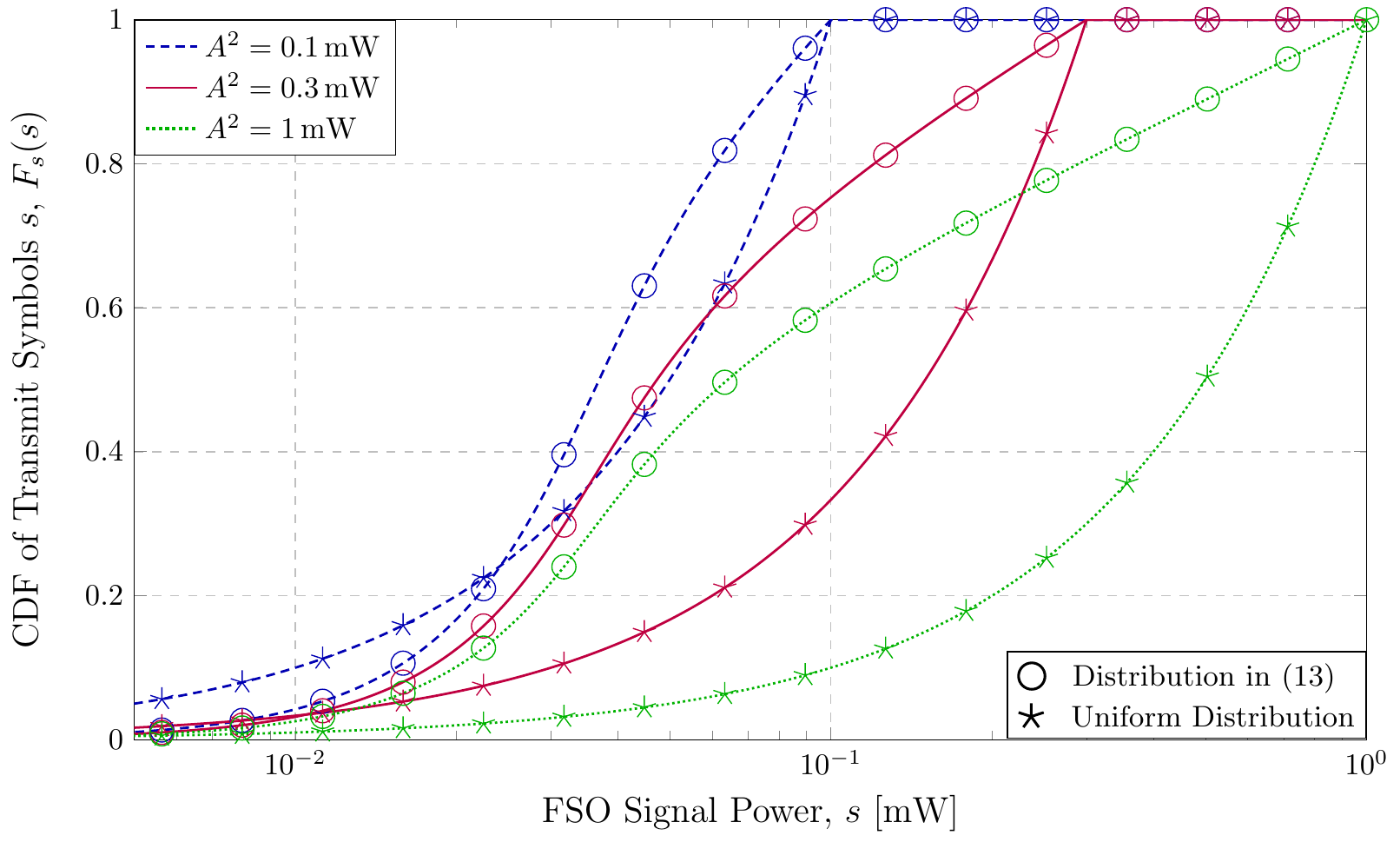}
	\vspace*{-7pt}
	\caption{Proposed cdf $F^*_s(s)$ in (\ref{Eqn:CapacityAchievingDistributionS}) and cdfs of uniformly distributed transmit signals $s$ for different values of $A^2$.}
	\label{Fig:Plot_CDFs}
	\vspace*{-13pt}
\end{figure}

\vspace*{-12pt}
\appendices
	\renewcommand{\thesection}{\Alph{section}}
	\renewcommand{\thesubsection}{\thesection.\arabic{subsection}}
	\renewcommand{\thesectiondis}[2]{\Alph{section}:}
	\renewcommand{\thesubsectiondis}{\thesection.\arabic{subsection}:}	
	\section{Proof of Proposition \ref{Theorem:Capacity}}
	\label{Appendix:ProofCapacity}
	The proof follows similar steps as that of \cite[Theorem 5]{Lapidoth2009}.
Exploiting the entropy power inequality, for a given $F_s$, we obtain the mutual information $I(y, s; F_s)$ between the TX and RX symbols $s$ and $y$ as follows \cite{Lapidoth2009}:\vspace*{-3pt}
\begin{align}
	I(y , s ; F_s ) &=  u(y; F_s) - u(n) = u(x+n; F_s) - u(n) \nonumber\\
	& \geq \frac{1}{2} \ln \Big( 1+ \frac{e^{2u(x; F_s)}}{ 2\pi e \sigma^2 } \Big) \triangleq R(F_s),\vspace*{-3pt}
	\label{Eqn:RateBound2}
\end{align}
\noindent where $u(y; F_s)$ and $u(n) = \frac{1}{2} \ln(2 \pi e \sigma^2)$ are the differential entropies of $y$ for given $F_s$ and AWGN, respectively.

We note that for a given FSO TX power $A^2$, the symbols at the RX output are bounded by $x[k] \in [\sqrt{P_\text{harv}(0, p_\text{a})}, \sqrt{ P_\text{harv} (hA^2, p_\text{a}) } ], \forall k$.
Therefore, the differential entropy $u(x)$, and hence, the achievable rate in (\ref{Eqn:RateBound2}) are maximized if the pdf $f^*_x$ of $x$ is uniform in $\text{dom} \{f^*_x\} = [\sqrt{P_\text{harv}(0, p_\text{a})}, \sqrt{P_\text{harv} (hA^2, p_\text{a})}]$ and is given by \vspace*{-5pt}
\begin{equation}
	f^*_x(x) = \frac{1}{\sqrt{P_\text{harv} (hA^2, p_\text{a})} - \sqrt{P_\text{harv} (0, p_\text{a})}   }.\vspace*{-5pt}
	\label{Eqn:CapacityAchievingDistribution}
\end{equation}
Finally, the corresponding cdf of $s$ and the maximum achievable rate are given by (\ref{Eqn:CapacityAchievingDistributionS}) and (\ref{Eqn:RateBound}), respectively.
This concludes the proof.

\begin{figure}[!t]
	\centering
	\includegraphics[draft = false, width=0.3\textwidth]{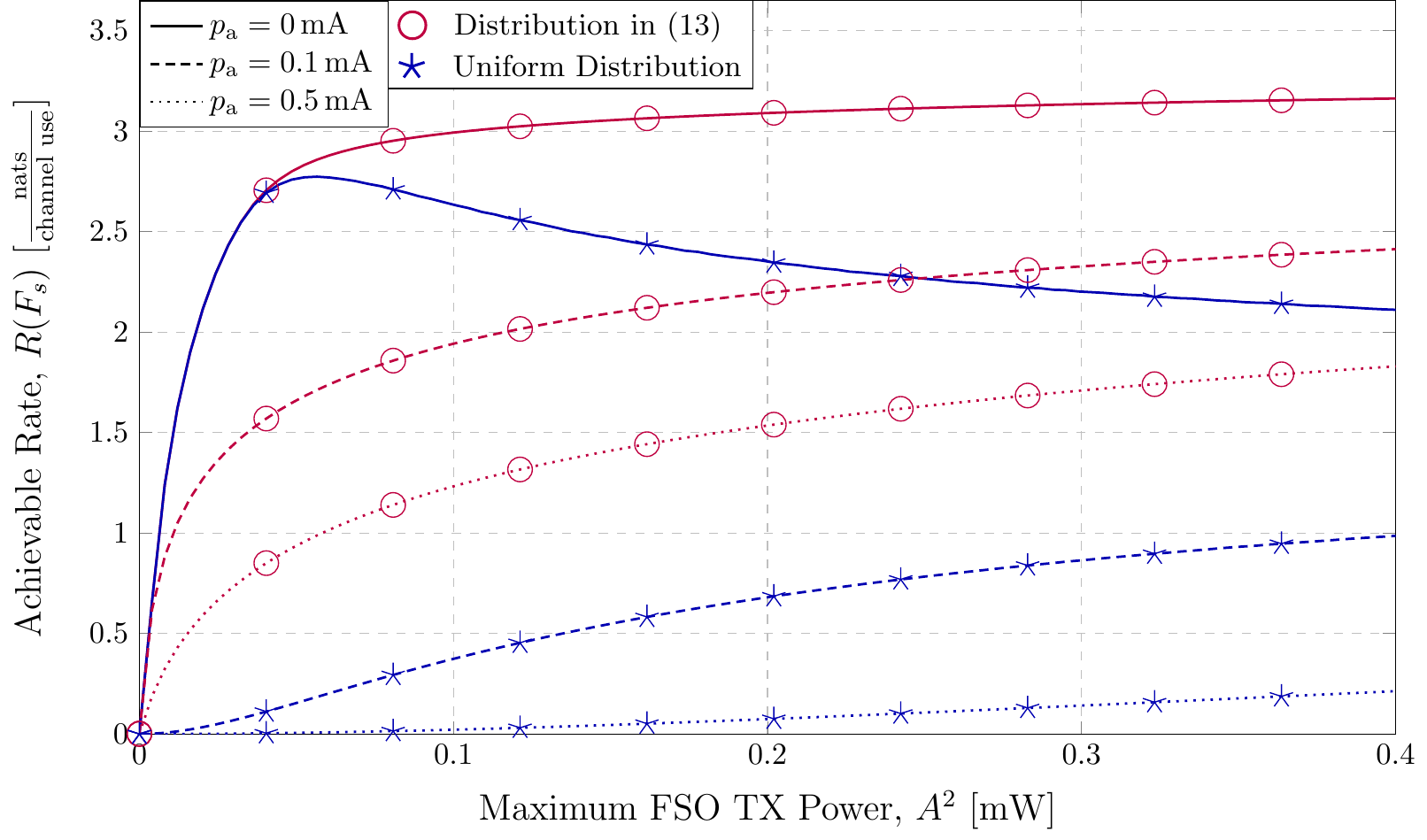}
	\vspace*{-7pt}
	\caption{Achievable information rates ${R}(F_\text{s})$ for different distributions $F_\text{s}$.}
	\label{Fig:Plot_AchievableRates}
	\vspace*{-13pt}
\end{figure}

\vspace*{-17pt}
\bibliographystyle{IEEEtran}
\bibliography{WPT_Bibl.bib}

\end{document}